\newcommand{\neu}[1]{{\textcolor{black}{#1}}}
\theoremstyle{plain} \numberwithin{equation}{section}
\newtheorem{thm}{Theorem}[section]
\newtheorem{lemma}[thm]{Lemma}
\theoremstyle{definition}
\newtheorem{defn}[thm]{Definition}
\theoremstyle{remark}
\theoremstyle{remark}
\DeclareMathOperator{\Var}{Var}
\DeclareMathOperator{\Cov}{Cov}
\DeclareMathOperator{\argmin}{argmin}
\DeclareMathOperator{\vol}{vol}
\DeclareMathOperator{\median}{median}
\newcommand{\R}{{\mathbb R}}
\newcommand{\Z}{{\mathbb Z}}
\newcommand{\E}{{\mathbb E}}
\newcommand{\pr}{{\mathbb P}}
\newcommand{\e}{\mathcal{E}}
\newcommand{\B}{\mathcal{B}}
\title{\bf Entropy-based inhomogeneity detection in porous media}
\author{Patricia Alonso Ruiz and  Evgeny Spodarev}
\date{}
\begin{document}
\maketitle
\vspace*{-1.5cm}
\begin{abstract}
We study a change-point problem for random fields based on a univariate detection of outliers via the $3\sigma$-rule in order to recognize inhomogeneities in porous media. In particular, we focus on fibre reinforced polymers modeled by stochastic fibre processes with high fibre intensity and search for abrupt changes in the direction of the fibres. As a measure of change, the entropy of the directional distribution is locally estimated within a window that scans the region to be analyzed.
\end{abstract}

{\bf Keywords}: inhomogeneity detection, entropy, fibre process, change-point problem, Boolean model.
%-------------------------------------------------------------------------------------------------------------------------------------------------------------------------------------------------------
\section{Introduction}
Lightweight materials are highly demanded in many industrial applications, for instance in automobile, aerospace or wind turbine construction. Fibre reinforced polymers (FRP) constitute an important class of such materials, whose macroscopic properties are directly influenced by their microstructure, and in particular by the orientation of the fibres. With the aim of gaining a better understanding of this relation, methods of non-destructive characterization such as the analysis of $3$-dimensional images by micro-computed tomography ($\mu$CT), combined with stochastic microstructure modeling, are currently being investigated~\cite{OS09,R+12,RSVW14}. 
 
In the compression moulding process of a FRP, the fibres order themselves inside the raw material as a result of mechanical pressure. During this process, deviations from the requested direction may occur, creating undesirable fibre clusters and/or deformations. This kind of inhomogeneities are characterized by abrupt changes in the direction of the fibres and its detection is subject of study in the change-point analysis. In particular in this paper, we are interested in developing a method that is sensitive to changes and at the same time is considerably fast. In order to reduce the complexity of the problem carried out by spatial data, we investigate the random field generated by the entropy of the directional distribution of the fibres estimated in a moving scanning window that runs over the observed piece of material. In this way, the initial spatial change-point problem becomes a univariate detection of outliers on a random field.% The aim of this paper is to present an entropy-based method of detection of regions where such inhomogeneities appear. %As explained later in more detail, the inhomogeneity region will be assumed to be relatively small with respect to the whole scanned area. 

\begin{figure}[H]
\centering
\includegraphics[scale=0.9]{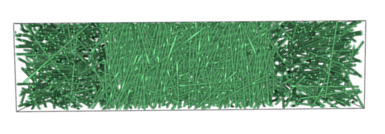}
\caption{A simulated fibre structure (courtesy of Katja Schladitz, ITWM Kaiserslautern).}
\end{figure}

%More precisely, the main idea consists in searching for outliers within the local entropy of the directional distribution of the fibres estimated on the moving scanning window. 
The entropy is estimated in a non-parametric manner by means of the entropy estimator introduced in~\cite{ARS17}. Due to its asymptotic normality, see~\eqref{eq CLT}, the $3\sigma$-rule yields a criterion to identify outliers. Searching for points above or below the given threshold depends on the specific application and both procedures are completely analogous. Moreover, this procedure may be applicable not only to the direction but to any other characteristic of interest, for instance the fibre length or the fibre curvature.

It is important to remark at this point that the use of the $3\sigma$-rule is justified by the asymptotic normality of the entropy estimator. As a result, the proposed method is reliable on structures with high fibre intensity. In the simulation study presented in Section~\ref{section simulations} we have therefore generated structures with an intensity of 5 fibres per unit volume. If we consider for instance a cube of side length $10$ $\mu$m filled with fibres of radius around $1.5$ $\mu$m and length $10$ $\mu m$, this corresponds to a fibre volume ratio of 35\%, a usual percentage in high-densely FRPs.

We have worked deliberately with simulated data for several reasons: on the one hand, the inhomogeneity region is known a priory and it can thus easily be compared with the output. This ensures an accurately testing of the performance of the detection algorithm. On the other hand, the segmentation of $3$-dimensional images of high-intensity fibre structures is still difficult and expensive.

In connection with this kind of detection problems one can find in the literature parametric scan-statistics~\cite{SJT12}, multivariate cumulative sums (cusum) methods~\cite{LZJ13} and change-point analysis~\cite{MJ93} for spatial data.

The fibre structures modeled in the present paper correspond to those having relatively short and rigid fibres that will be assumed to have the same length. More precisely, we consider the \textit{Boolean model}
\[
\Phi_\lambda=\bigcup_{Y_i\in\Pi_\lambda}(F_{Y_i}+Y_i),
\]
where $\Pi_\lambda=\{Y_i\}_{i\geq 1}\subseteq\R^3$ is a \textit{homogeneous Poisson point process} of intensity $\lambda>0$ and $F_{Y_i}$ denotes an independent copy of a line segment $F$ with a given length $\ell >0$. Each point $Y_i$ corresponds to the center of a fibre $F_{Y_i}$ and thus $\Phi_\lambda$ is a random set that models a system of independent fibres. To each center $Y_i$ a mark $\xi_i$ is attached that represents the (random) unit direction vector of the fibre $F_{Y_i}$. These marks are assumed to be i.i.d. random variables independent of the location of the points $Y_i$, and their distribution is supposed to have a continuous density $f$ with respect to a spherical surface measure. 

Based on the marked Poisson point process (MPPP) $\Psi_\lambda:=\{(Y_i,\xi_i)\}_{i\geq 1}$, a non-parametric estimator of the entropy of the directional distribution has been introduced in~\cite{ARS17}. Although the Poisson assumption is too strong to actually be considered in applications, we have focused on reproducing the property of high fibre volume ratio. Applying our methods to more realistic models such as hard-core processes is still a matter of future research.

The paper is organized as follows: in Section~\ref{section theory}, definitions and theoretic results concerning the estimation of entropy for the MPPP $\Psi_\lambda$ are presented. Section~\ref{section method} describes the proposed method of detection of inhomogeneities in the direction of the fibres and investigates the optimal size relationship between the inhomogeneity region and the scanning window size. Section~\ref{section simulations} is devoted to a simulation study of the estimators and the detection algorithm. The paper finishes in Section~\ref{section discussion} with a discussion of possible limitations of the method which are object of further research.

\section{Non-parametric entropy estimation.}\label{section theory}
The key element in the proposed method to detect inhomogeneities in a FPR is the use of the \textit{entropy} of the directional distribution of the fibres composing it. The concept of entropy was introduced by Shannon in~\cite{Sha48} to provide a way of measuring the closeness of a distribution to the uniform distribution. Given a random variable $\xi$ that takes values on an abstract measurable space $(M,\nu)$ and has distribution density $f_\xi\colon M\to\R$ with respect to the measure $\nu$, its entropy is defined as
\[
\e_{f_\xi}:=-\E[\log f_\xi(\xi)]=-\int_M\log f_\xi(x)\,f_\xi(x)\,\nu(dx).
\]
As mentioned in the introduction, the fibre structures being considered here are modeled by the MPPP
\begin{equation}\label{def MPP}
\Psi_\lambda=\{(Y_i,\xi_i)\}_{i\geq 1}\subseteq \R^3\times S^{2},\qquad\lambda>0,
\end{equation}
where $S^2$ denotes the unit sphere in $\R^3$. The Euclidean space will be equipped with the Lebesgue measure $\vol(dy):=dy$, and $S^2$ with the surface-area measure $\omega(\cdot)$. The corresponding $\sigma$-algebras of Borel sets will be denoted by $\B(\R^3)$, respectively $\B(S^2)$.

The present section summons up the theoretic results concerning density and entropy estimation of the directional distribution $f$ in this particular $3$-dimensional setting. We refer to~\cite{ARS17} for the more general case of MPPPs with manifold-valued marks.
\subsection{Kernel estimator of the directional distribution density}
Let $B\in\B(\R^3)$ denote the observation window (image, piece of material) that provides the data. For simplicity, one can think of $B=[0,b]^3$ for some $b>0$. The estimator of the directional density $f\colon S^2\to\R$ on $B$ proposed in~\cite{ARS17} is given by
\begin{equation}\label{eqn def esti f}
\hat{f}_{B}(\eta):=\frac{1}{\lambda\vol(B)}\sum_{i\geq 1}\frac{\mathds{1}_{\{Y_i\in B\}}}{h^{2}\theta_\eta(\xi_i)}K\left(\frac{d_g(\eta,\xi_i)}{h}\right),\qquad \eta\in S^2,
\end{equation} 
where $h>0$ denotes the \textit{bandwidth}, $\theta_\eta(\xi_i)=|\sin \arccos \langle \eta, \xi_i\rangle|/\arccos \langle \eta, \xi_i\rangle$ is the so-called \textit{volume density function}, $K\colon\R_+\to\R$ is a \textit{kernel function}, and $d_g(\eta,\xi_i)= \arccos\langle \eta, \xi_i\rangle$ is the \textit{geodesic distance} between $\eta,\xi_i\in S^2$. Here, $\langle\cdot,\cdot\rangle$ denotes the standard scalar product on $\R^3$.

Under the proper choice of a kernel $K$ and a sequence of bandwidths $\{h_n\}_{n\geq 1}$ (see~\cite[Corollary 3.2]{ARS17} for details), this estimator is $L^2$-consistent, i.e.
\begin{equation*}
\E[\Vert\hat{f}_{B_n}- f\Vert^2_2]\xrightarrow{n\to\infty}0,%\leq\frac{4C_\theta\pi K^2_0}{\lambda \abs{B_n}h_n^2}+\frac{3}{4}\pi h_n^4 C_2^2 K_2^2
\end{equation*}
where $\{B_n\}_{n\geq 1}\subseteq\R^3$ is a sequence of regularly growing Borel sets, for instance $B_n=[0,b_n]^3$, with $b_n\to\infty$, and $\|f\|_2^2=\int_{S^2}f^2(\xi)\,\omega(d\xi)$. In addition, if the observation windows $B_n$ are large enough, $\hat{f}_{B_n}$ is almost surely consistent as well (c.f.~\cite[Theorem 3.8]{ARS17}). Moreover, it is possible to give an asymptotically optimal bandwidth 
\[
h_{opt}=\left(\frac{C_\theta K_0^2}{2C_2^2K_2^2\lambda\vol(B)}\right)^{1/6},
\]
where the constants $C_\theta,K_0,K_2,C_2>0$ depend on the volume density function $\theta_\eta$, the kernel $K$ and the density $f$.
\subsection{Entropy estimator}
The non-parametric estimator of the directional distribution $f$ in the observation window $B$ is defined as
\begin{equation}\label{eq def esti e}
\widehat{\e}_f(B):=-\frac{1}{\lambda\vol(B)}\sum_{i\geq 1} \mathds{1}_{\{Y_i\in B\}}\log\hat{f}_{B'+Y_i}(\xi_i),
\end{equation}
where $B'\subseteq B$ is the sub-window in which the density $f$ is estimated and $B'+y:=\{x+y,~x\in B'\}$ denotes the translation of $B'$ by $y\in\R^3$. The additional window $B'$ is introduced for the purpose of notation and it is only relevant when studying the asymptotic normality of the estimator. For the purpose of estimation and consistency there are no restrictions on it and one can assume $B'=B$. 

Under the proper assumptions on the kernel $K$, the sequence of bandwidths $\{h_n\}_{n\geq 1}$ and some integrability properties of $f$ (see~\cite[Theorem 4.1]{ARS17} for details), this estimator is $L^2$-consistent, i.e.
\begin{equation*}
\E[\vert\widehat{\e}_f(B_n)-\e_f\vert^2]\xrightarrow{n\to\infty}0,
%\leq 3\left(\frac{8K_0C_\theta\upsilon_g(M)}{\lambda^2\edit{|B_n'|^2}b_n^p}+\frac{4}{\lambda^2|B_n'|}+16b_n^2L_2+\frac{L_1}{\lambda\edit{|B'_n|}}\right)
\end{equation*}
for a sequence of regularly growing Borel sets $\{B_n\}_{n\geq 1}\subseteq\R^d$. The bias of $\widehat{\e}_f(B_n)$ is controlled by the bandwidth $h_n$ for which it is possible to give the asymptotically optimal value
\begin{equation}\label{eq optimal h}
h_{opt}=\Big(\frac{2\pi K_0C_\theta}{L_2\lambda^2\vol(B)\vol(B')}\Big)^{1/4},
\end{equation}
where the constants $C_\theta,K_0,L_2>0$ depend on $f$ (c.f.~\cite[Remark 4.4]{ARS17} and~\cite[Remark 4.3.3]{Sch16}).

Due to the complex dependency structure of the random field $\{\log_{B'+Y_i}(\xi_i)\}_{i\geq 1}$, it is necessary to modify the estimator in order to study its asymptotic normality. Namely, we consider
\begin{equation}\label{eq def modified e}
\widehat{\e}^*_f(B):=-\frac{1}{\lambda\vol(B)}\sum_{i\geq 1} \mathds{1}_{\{Y^*_i\in B\}}\log\hat{f}_{B'+Y^*_i}(\xi^*_i),
\end{equation}
where $\Psi^*_\lambda=\{(Y_i^*,\xi^*_i)\}_{i\geq 1}$ is an independent copy of the original MPPP $\Psi_\lambda$. It is shown in~\cite[Theorem 5.7]{ARS17} that under suitable assumptions, in particular when the sequences of growing windows $\{B_n\}_{n\geq 1}$ and $\{B_n'\}_{n\geq 1}$ satisfy $B_n=[0,b_n]^3$, $B_n'=[0,m_n]^3$ and $b_n=m_n^{4+\delta}$ for some $\delta>0$, the central limit theorem
\begin{equation}\label{eq CLT}
\sqrt{\vol(B_n)}\frac{\widehat{\e}^*_f(B_n)-\widehat{\mu}_{B_n}}{\sigma_n}\xrightarrow{\quad d\quad}\mathcal{N}(0,1)
\end{equation}
holds. Here,
\begin{align}\label{eq mu sigma CLT}
&\widehat{\mu}_{B_n}=\frac{\#(\Pi_\lambda\cap B_n)}{\lambda\vol(B_n)}\E[-\log\hat{f}_{B'_n}(\xi^*_0)],\\
&\sigma^2_n=\Var(\log\hat{f}_{B'_n}(\xi^*_0))+\lambda^2\int_{B_n'}\Cov(\log\hat{f}_{B'_n}(\xi^*_0),\log\hat{f}_{B'_n}(\xi'_y))\,dy,\nonumber
\end{align}
where $\{\xi_y', y\in\R^{3}\}$ are i.i.d. copies of $\xi_0$ and $\# B$ is the cardinality of a finite set $B$. A simulation study of this limit theorem is presented in Section~\ref{section simulations}.

\medskip

We would like to remark that the results mentioned in this section also hold for MPPPs whose marks $\xi_i$ take values in more abstract spaces. Thus, the marks need not represent the direction but any other characteristic of a fibre, as long as its distribution has a density satisfying certain constraints. These are nevertheless fulfilled by distributions of many quantities of interest, such as fibre length or curvature.

\section{Detection of inhomogeneities. Methodology.}\label{section method}
How can a fibre cluster in a FRP be detected? These kind of inhomogeneities arise for instance when the production process fails to orient the fibres of a FRP properly. This section proposes an entropy-based method to answer this question that relies on the entropy of the directional distribution of the fibres in view of the ability of entropy to perceive abrupt changes in a distribution. The aforementioned estimator $\widehat{\e}^*_f(B)$ and its asymptotic properties will thus play a decisive role in this procedure. 

An important necessary condition for a successful performance will be that the directional distribution of the fibres in and outside the inhomogeneity region strongly differ from each other. Further possible limitations of our method will be discussed at the end of the section.

\subsection{Description of the method}
Let $W\subset\R_+^3$ denote the window (piece of material) to be analysed, $A\subset W$ the inhomogeneity region, and $B\subset\R_+^3$ the scanning window. For simplicity, these sets will be assumed to be cubes of side lengths $w,a,b>0$ respectively, i.e. $W=[0,w]^3$, $A=[0,a]^3+a_0$ for some $a_0\in\R^3_+$, and $B=[0,b]^3$. Moreover, we will assume that $0<b<a<w$. In particular, $b<a$ means that the volume of the scanning window is smaller than the volume of the inhomogeneity region. Notice that a scanning window that is significantly bigger than the inhomogeneity will fail to identify that region. Moreover, the simulation study has indicated this assumption to provide the best performance (see e.g. Figure~\ref{fig limitations}). 

The presence of $A$ will be detected by means of hypothesis testing. Under the hypothesis of homogeneity, the directional distribution is the same for any fibre, and we will assume that this distribution has a density $f$. Rejecting the hypothesis will be an indicator of the existence of $A$.

In a first step, the window $W$ will be scanned using the window $B$ by letting it run through the points in the minus-sampled window $W\ominus B$ to avoid boundary effects. Recall that given two sets $B_1,B_2\subset\R^3$, their \textit{Minkowski difference} $B_1\ominus B_2$ is defined as the set $\{x-y~\colon~x\in B_1, y\in B_2\}$.

At each point $x\in W\ominus B$, the entropy will be locally estimated in $B+x$, generating the stationary random field 
\[
\{\widehat{\e}_f^*(B+x), x\in W\ominus B\}.
\]

For a finite number of observation points $x_1,\ldots, x_n\in W\ominus B$, we set 
\begin{align*}
&\tilde{\mu}(\neu{\{x_1,\ldots ,x_n\}}):=\median\{\widehat{\e}_f^*(B+x_1),\ldots,\widehat{\e}_f^*(B+x_n)\},\\
&\hat{\mu}(\neu{\{x_1,\ldots ,x_n\}}):=\frac{1}{n}\sum_{i=1}^n\widehat{\e}_f^*(B+x_i),\\
&\hat{\sigma}^2(\neu{\{x_1,\ldots ,x_n\}}):=\frac{1}{n-1}\sum_{i=1}^n(\widehat{\e}_f^*(B+x_i)-\hat{\mu}(\{x_1,\ldots ,x_n\}))^2.
\end{align*}
In order to quantify a substantial deviation from the hypothesis of homogeneity, we use in a second step the asymptotic Gaussianity of $\widehat{\e}_f^*(B)$ stated in~\eqref{eq CLT} to view $\{\widehat{\e}_f^*(B+x), x\in W\ominus B\}$ approximately as a Gaussian random field (GRF). To each of its elements, the so-called \textit{$3\sigma$-rule} may be applied, which states that % the probability of a unimodal
for a Gaussian random variable $X$  it holds that%falling away from its mean by more that 3 standard deviations is at most $5\%$, i.e.
\begin{equation}\label{eq 3sigma rule}
\pr\big(|X-\E[X]|>3\sqrt{\Var(X)}\big)\approx 0.0027.
\end{equation}
This result can be extended to arbitrary distributions with unimodal density, its mode as center, and the upper bound $0.05$ instead of $0.0027$ in~\eqref{eq 3sigma rule}, see~\cite{Puk94} for a review. This rule underlies our proposed method to detect the inhomogeneity region $A$.

Recall that the direction of the fibres lying inside the region $A$ is supposed to strongly differ from the direction of the fibres outside of $A$. If the value of the local entropy $\widehat{\e}_f^*(B+x)$ substantially deviates from the ``median entropy'' $\tilde{\mu}(\{x_1,\ldots ,x_n\})$, this will indicate that the hypothesis is violated, suggesting that the point $x$ lies in $A$. The employment of the median rather than the empirical mean is our proposal to avoid outliers. The estimated inhomogeneity region will thus be defined as the excursion set
\begin{equation*}
\widehat{A}^{(n)}_{B,W}:=\{x\in W\ominus B~\colon~|\widehat{\e}_f^*(B+x)-\tilde{\mu}_n|>3\hat{\sigma}_n\},
\end{equation*}
where $\tilde{\mu}_n=\tilde{\mu}(\{x_1,\ldots ,x_n\})$ and $\hat{\sigma}_n=\hat{\sigma}^2(\{x_1,\ldots,x_n\})$ with $x_1,\ldots ,x_n\in W\ominus B$. In practice, we will take  $\tilde{\mu}=\tilde{\mu}((W\ominus B)\cap r\Z^3)$ and $\hat{\sigma}=\hat{\sigma}((W\ominus B)\cap r\Z^3)$, where $r\Z^3$ denotes the lattice of mesh size $r>0$. In simulations, these lattice points will also be used to discretize $W\ominus B$.

\subsection{Scanning window size}
A significant issue concerning the proposed method is the choice of the size of the scanning window $B$. On the one hand, $B$ should be large enough so that the local entropy is estimated accurately. On the other hand, a far too large scanning window may oversee the inhomogeneity. 

In this paragraph we give an answer to this question under the condition that some a priori information about the side length relation between the inhomogeneity and the whole observation window is known. In particular we will see that, if these lengths depend linearly on each other, then the side length of the observation window $B$ is a multiple of the side length of the inhomogeneity region $A$ and the factor will depend on the quantile $\pr(|\widehat{\e}_f^*(B)-\tilde{\mu}|>3\hat{\sigma})$. Setting this quantile to $0.05$ and if for instance the side length $w$ of the observation window and the side length $a$ of the inhomogeneity satisfy the relation $w=7a$, then $B$ will be chosen to have side length $b=0.489a$.

We use the notion of \textit{expected distance in measure} to quantify the approximation error of the estimated (random) inhomogeneity region $\widehat{A}_{B,W}$. For an abstract measurable space $(M,\nu)$ and any two measurable sets $S_1,S_2\subseteq M$, the \textit{distance in measure} between them is given by the measure of their symmetric difference $S_1\small{\triangle}S_2:=(S_1\cap S_2^c)\cup (S_1^c\cap S_2)$. The analogous concept for random sets is known as the expected distance in measure, which is defined as follows.

\begin{defn}
Let $(M,\nu)$ be a a measurable space. The expected distance in measure with respect to $\nu$ is the function $d_\nu\colon\B(M)\times\B(M)\to[0,\infty]$ defined as
\[
d_\nu(\Gamma_1,\Gamma_2):=\E[\nu(\Gamma_1\triangle\Gamma_2)]
\]
for any Borel measurable random closed sets $\Gamma_1,\Gamma_2\colon \Omega\to\B(M)$.
\end{defn}
In view of this definition, the size, i.e. the Lebesgue measure or volume, of the scanning window $B$ is optimal when the expected distance in measure between the original inhomogeneity region $A$ and the estimated region $\widehat{A}_{B,W}$ is minimal. In other words, $B$ will be chosen to satisfy
\[
B=\argmin\limits_{\tilde{B}}d_{\vol}(A,\widehat{A}_{\tilde{B},W}). 
\]
Notice that the MPPP $\Psi_{\lambda,A}:=\{(Y_i,\xi_i)\}_{i\geq 1}$ modeling a fibre system with an inhomogeneity has marks $\xi_i$ that are independent but \textit{not} identically distributed. Namely, marks that correspond to points in $A$ follow a directional distribution with density $g$ whereas marks corresponding to points in $W\setminus A$ have a different direction with distribution density $f$. In this case, the entropy estimator will be simply denoted by $\widehat{\e}^*$ since there is no specific density distribution to refer to.
%In this situation, the entropy estimator becomes
%
%\begin{equation}\label{eq def esti e2densities}
%\widehat{\e}^*(B):=\frac{1}{\lambda\vol(B)}\sum_{i\geq 1}\Big(\mathds{1}_{\{Y^*_i\in B'\cap A\}}\log\hat{g}_{B'+Y^*_i}(\xi^*_i)+\mathds{1}_{\{Y^*_i\in B\cap A^c\}}\log\hat{f}_{B'+Y^*_i}(\xi^*_i)\Big)
%\end{equation} 
%with $B'\subseteq B\subseteq\R^3$ is defined following~\eqref{eq def esti e}. This estimator is written without any specific subscript because there are two different directional densities on play. 

For simplicity of the calculations and to avoid further boundary effects, we will assume that the observation window $W$ is large enough and the inhomogeneity region $A$ small enough so that 
\begin{equation}\label{eq size condition A and W}
A\oplus B\subseteq W\ominus B.
\end{equation}
The expected distance in measure with respect to the measure $\vol(\cdot)$ has the following expression.

\begin{lemma}\label{lem dvol}
Let $R_1=( A\oplus B)\setminus A$ and let $R_2=A\setminus (A\ominus B)$, where $\partial A$ denotes the boundary of $A$. Then,
\begin{align*}
d_{\vol}(A,\widehat{A}_{B,W})&=\vol(A)+\vol\big((W\ominus B)\setminus(A\oplus B)\big)\pr(|\widehat{\e}_f^*(B)-\tilde{\mu}|>3\hat{\sigma})\\
&-\vol(A\ominus B)\pr(|\widehat{\e}_g^*(B)-\tilde{\mu}|>3\hat{\sigma})\\
&+\int_{R_1}\pr(|\widehat{\e}^*(B+x)-\tilde{\mu}|>3\hat{\sigma})dx-\int_{R_2 }\pr(|\widehat{\e}^*(B+x)-\tilde{\mu}|>3\hat{\sigma})dx.
\end{align*}
\end{lemma}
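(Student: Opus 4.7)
The plan is to express the expected volume of the symmetric difference $A\triangle \widehat{A}_{B,W}$ via Fubini as an integral of pointwise probabilities $\pr(x\in A\triangle \widehat{A}_{B,W})$, then partition the integration region into four geometric pieces on which the conditional distribution of $\widehat{\e}^*(B+x)$ simplifies.

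First, I would write
\[
d_{\vol}(A,\widehat{A}_{B,W})=\E[\vol(A\triangle \widehat{A}_{B,W})]=\int_{\R^3}\bigl[\mathds{1}_{\{x\in A\}}\pr(x\notin \widehat{A}_{B,W})+\mathds{1}_{\{x\notin A\}}\pr(x\in \widehat{A}_{B,W})\bigr]\,dx.
\]
Since $\widehat{A}_{B,W}\subseteq W\ominus B$ by definition and the hypothesis $A\oplus B\subseteq W\ominus B$ forces $A\subseteq W\ominus B$, the integrand vanishes off $(W\ominus B)$. I would split the remaining mass as $\int_{A}+\int_{(W\ominus B)\setminus A}$, producing $\vol(A)-\int_{A}p(x)\,dx+\int_{(W\ominus B)\setminus A}p(x)\,dx$ where $p(x):=\pr(|\widehat{\e}^*(B+x)-\tilde\mu|>3\hat\sigma)$.

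The next step is to refine the two remaining regions by the position of the scanning block $B+x$ relative to $A$. Decompose $A=(A\ominus B)\sqcup R_2$ (so $B+x\subseteq A$ on $A\ominus B$ and $B+x$ straddles $\partial A$ on $R_2$) and, using (\ref{eq size condition A and W}), decompose $(W\ominus B)\setminus A=R_1\sqcup\bigl((W\ominus B)\setminus(A\oplus B)\bigr)$ (so $B+x$ straddles $\partial A$ on $R_1$ and $B+x\cap A=\emptyset$ on the complementary piece). On the two ``interior'' pieces, the independent copy $\Psi^*_\lambda$ has stationary, homogeneously marked points inside $B+x$ with density $g$ or $f$ respectively, so the translation invariance of a stationary Poisson process with i.i.d.\ marks yields
\[
\widehat{\e}^*(B+x)\stackrel{d}{=}\widehat{\e}_g^*(B)\quad\text{for }x\in A\ominus B,\qquad \widehat{\e}^*(B+x)\stackrel{d}{=}\widehat{\e}_f^*(B)\quad\text{for }x\in(W\ominus B)\setminus(A\oplus B),
\]
so that $p(x)$ is constant on each of these sets and can be pulled out of the integral. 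The remaining ``boundary'' contributions on $R_1$ and $R_2$ stay inside the integral in the mixed form $\pr(|\widehat{\e}^*(B+x)-\tilde\mu|>3\hat\sigma)$. Collecting the four pieces with the appropriate signs reproduces exactly the displayed formula.

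The main subtlety I expect to navigate is the justification of the distributional identities $\widehat{\e}^*(B+x)\stackrel{d}{=}\widehat{\e}_{g,f}^*(B)$: one must argue that $p(x)$ really depends on $x$ only through the occupation pattern of $B+x$ with respect to $A$, which uses both the stationarity of $\Psi^*_\lambda$ and the fact that the auxiliary quantities $\tilde\mu,\hat\sigma$ are treated as (given) reference values against which the pointwise probabilities are computed (otherwise a subtle dependence between $x$ and the median/standard deviation would appear). A minor but necessary bookkeeping point is to verify the disjointness of the two decompositions; this is where the containment assumption $A\oplus B\subseteq W\ominus B$ is essential, as it guarantees that $R_1$ lies entirely inside $W\ominus B$ and that the complementary set $(W\ominus B)\setminus(A\oplus B)$ is exactly the ``purely exterior'' region.
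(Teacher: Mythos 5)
Your proposal is correct and follows essentially the same route as the paper: both reduce $\E[\vol(A\triangle\widehat{A}_{B,W})]$ via Fubini to integrals of pointwise exceedance probabilities over $W\ominus B$, then partition into the same four regions $(W\ominus B)\setminus(A\oplus B)$, $A\ominus B$, $R_1$, $R_2$ and use stationarity to pull out the constants $\alpha_f$, $\alpha_g$ on the two interior pieces. The only cosmetic difference is that you apply Fubini directly to the indicator of the symmetric difference, whereas the paper first expands $\vol(A\triangle\widehat{A}_{B,W})=\vol(A)+\vol(\widehat{A}_{B,W})-2\vol(A\cap\widehat{A}_{B,W})$; your explicit remark that $\tilde\mu$ and $\hat\sigma$ must be treated as fixed reference values for the stationarity argument is a point the paper leaves implicit.
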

\begin{proof}
On the one hand, by definition of expected distance and symmetric difference we have that %($\vol(C\triangle D)=\vol(C)+\vol(D)-2\vol(C\cap D)$
\begin{equation}\label{eq dvol}
d_{\vol}(A,\widehat{A}_{B,W})=\vol(A)+\E[\vol(\widehat{A}_{B,W})]-2\E[\vol(A\cap\widehat{A}_{B,W})].
\end{equation}
On the other hand, Fubini's theorem yields
\[
\E[\vol(\widehat{A}_{B,W})]=\int_{W\ominus B}\pr(|\widehat{\e}^*(B+x)-\widetilde{\mu}|>3\hat{\sigma})\,dx
\]
and since %$A\cap\widehat{A}_{B,W}=\widehat{A}_{B,A}$
$A\subseteq W\ominus B$ by~\eqref{eq size condition A and W}, this also yields
\[
\E[\vol(A\cap\widehat{A}_{B,W})]=\int_A\pr(|\widehat{\e}^*(B+x)-\widetilde{\mu}|>3\hat{\sigma})\,dx.
\]
Let us analyze the first integral. Notice that if $x\in (W\ominus B)\setminus(A\oplus B)$, then $B+x$ does not intersect $A$ and therefore $\widehat{\e}^*(B+x)=\widehat{\e}^*_f(B+x)$. On the other hand, if $x\in A\ominus B$, then $\widehat{\e}^*(B+x)=\widehat{\e}^*_g(B+x)$. In view of the stationarity of $\{\widehat{\e}^*_f(B+x),\,x\in \R^3\}$ and $\{\widehat{\e}^*_g(B+x),\,x\in \R^3\}$ we have that
\begin{align*}
&\int_{W\ominus B}\pr(|\widehat{\e}^*(B+x)-\widetilde{\mu}|>3\hat{\sigma})\,dx\\
&=\vol\big((W\ominus B)\setminus(A\oplus B)\big)\pr(|\widehat{\e}_f^*(B)-\widetilde{\mu}|>3\hat{\sigma})\\
&+\vol(A\ominus B)\pr(|\widehat{\e}_g^*(B)-\widetilde{\mu}|>3\hat{\sigma})+\int_{R_1\cup R_2}\pr(|\widehat{\e}^*(B+x)-\widetilde{\mu}|>3\hat{\sigma})\,dx.
\end{align*}
Analogously we obtain
\begin{align*}
&\int_A\pr(|\widehat{\e}^*(B+x)-\widetilde{\mu}|>3\hat{\sigma})\,dx\\
&=\vol(A\ominus B)\pr(|\widehat{\e}_g^*(B)-\widetilde{\mu}|>3\hat{\sigma})+\int_{R_2}\pr(|\widehat{\e}^*(B)-\widetilde{\mu}|>3\hat{\sigma})\,dx.
\end{align*}
Plugging these equalities into~\eqref{eq dvol} leads to the desired expression.
\end{proof}
This result can be applied in order to obtain an upper bound for the distance $d_{\vol}(A,\widehat{A}_{B,W})$ in terms of the side lengths of $W,A$ and $B$. The side length of the scanning window $B$ will be chosen to minimize this bound and it will depend on the side lengths of $W$ and $A$.

Taking into account assumption~\eqref{eq size condition A and W}, the inhomogeneity region can be expressed as $A=[0,a]^3+a_0$ for some $a>0$ and $a_0\in W\setminus B''$ with $B''=[0,2b+a]^3$. In this case, the following identities hold
\[
W\ominus B=[0,w-b)^3,\quad A\ominus B=[0,a-b)^3+a_0,\quad A\oplus B=[0,a+b)^3+a_0.
\]
Further, set 
\[
\alpha_f:=\pr(|\widehat{\e}_f^*(B)-\tilde{\mu}|>3\hat{\sigma}),\quad\text{and}\quad\alpha_g:=\pr(|\widehat{\e}_g^*(B)-\tilde{\mu}|>3\hat{\sigma}).
\]
In view of Lemma~\ref{lem dvol} and since $\vol(R_1)=\vol(A\oplus B)-\vol(A)$,
\begin{align*}
d_{\vol}(A,\widehat{A}_{B,W})&\leq\vol(A)+\vol\big( (W\ominus B)\setminus (A\oplus B)\big)\alpha_f-\vol(A\ominus B)\alpha_g+\vol(R_1)\\
&=\vol(A\oplus B)(1-\alpha_f)+\vol(W\ominus B)\alpha_f-\vol(A\ominus B)\alpha_g\\
&=(a+b)^3(1-\alpha_f)+(w-b)^3\alpha_f-(a-b)^3\alpha_g.
\end{align*}
Furthermore, the entropy of the directional distribution of fibres lying in $A$ must deviate from the median $\widetilde{\mu}$ with higher probability than when the fibres lie outside of $A$. Hence, we can assume that $\alpha_f<\alpha_g$ and therefore get
\[
d_{\vol}(A,\widehat{A}_{B,W})\leq (a+b)^3(1-\alpha_f)+\big((w-b)^3-(a-b)^3\big)\alpha_f.
\]
With the help of computational software (here Mathematica was used) the value of %(\textbf{use mathematica or any other program})
\[
\argmin_b\{(a+b)^3(1-\alpha_f)+\big((w-b)^3-(a-b)^3\big)\alpha_f\}
\]
is given by the expression
\begin{equation}\label{eq optimal scanning window}
b_{opt}=\frac{\sqrt{\alpha_f(w-a)(w+(3-4\alpha_f)a)}-(1-2\alpha_f)a-\alpha_f w}{1-\alpha_f}.
\end{equation}
Since $a<w$ and $\alpha_f$ small, it holds that $\alpha_f(w-a)(w+(3-4\alpha_f)a)>0$. Yet, the parameter $\alpha_f$ and the relation between $a$ and $w$ must be chosen in order to assure $0<b_{opt}<a$. %$\alpha_f$ should be relatively small (actually less than 0.05 in view of~\eqref{eq 3sigma rule}). In case that $\alpha_g<\alpha_f$, then $\alpha_g$ is at least that small as well. 
As mentioned at the beginning of this paragraph and following the $3\sigma$-rule, setting $\alpha_f=0.05$ we chose $w=7a$ for the simulation study, for which~\eqref{eq optimal scanning window} yields $b=0.489a$. 

\section{Simulations}\label{section simulations}
In order to validate the estimators introduced in Section~\ref{section theory} as well as to examine the efficiency of the inhomogeneity detection method presented Section~\ref{section method}, this section is devoted to several tests on simulated data. One of the main reasons for working with simulations is that the input is known and therefore the estimation error can be computed accurately. Moreover, the MPPPs considered for these simulations model structures with a high fibre intensity, for which the analysis of real $\mu$CT data is usually difficult and costly. The high intensity plays an important role in the estimation of the entropy and therefore in the detection method: since the estimators $\hat{f}_B$ and $\widehat{\e}^*$ are asymptotically consistent, the amount of data (fibres) has to be sufficiently large in order to provide a good approximation.

\subsection{Density estimation}
In order to test the density estimator $\hat{f}_B$ defined in~\eqref{eqn def esti f}, an independently marked MPPP of intensity $15$ was simulated in an observation window $B=[0,50]^3$. Table~\ref{Kernels table} and Figure~\ref{Kernel plots} display the approximation error of $\hat{f}_B$ for different kernels (see e.g.~\cite{Tsy09} for definitions and further details about kernel functions). Here, the computed error is given by $\max_{x\in S^2_d}|\hat{f}_B(x)-f(x)|$, where $S^2_d$ denotes a discretization of the sphere into 
%how many??
a grid of points. Since the constants appearing in the expression~\eqref{eq optimal h} of the optimal bandwidth cannot be computed explicitly, by empirical observation the bandwidth was set to be $h=\big(\frac{1 + \vol(B)}{\vol(B)^{10/9}}\big)^{1/4}$. Notice that this quantity has the same asymptotic order as~\eqref{eq optimal h} when $\vol(B')=\vol(B)^{1/9}$.
 
\begin{table}[H]
\centering
\begin{tabular}{l|c|c|c|c|}
 &\hspace*{.1in}Uniform\hspace*{.1in}&\hspace*{.05in}$\operatorname{Schladitz}(2)$\hspace*{.05in}& $\operatorname{Watson}(2)$ & $\operatorname{Fisher}(2)$\\ \hline 
Biweight 		& $0.0094$ & $0.0125$ & $0.2487$ & $0.0152$\\ %\hline
Epanechiknov 	& $0.0195$ & $0.0218$ & $0.3393$ & $0.0499$\\ %\hline
Triangular		& $0.0089$ & $0.0089$ & $0.2751$ & $0.0226$\\
Tricube			& $0.0082$ & $0.0098$ & $0.2536$ & $0.0149$\\
Triweight		& $0.0190$ & $0.0285$ & $0.2487$ & $0.0521$\\
Uniform			& $0.0524$ & $0.0654$ & $0.5284$ & $0.1565$
\end{tabular}
\caption{Approximation error of different distribution densities $f$ by its kernel density estimator $\hat{f}_B$ for different kernels measured in the uniform convergence metric.}
\label{Kernels table}
\end{table}

The density $f$ corresponds in each case to the uniform, Schladitz(2), Watson(2) or Fisher(3) directional distribution on the sphere, see e.g.~\cite{FLE93,S+06} for precise definitions of these distributions. 

\bigskip

\begin{figure}[H]
\centering
\begin{tabular}{cccc}
\includegraphics[scale=0.4]{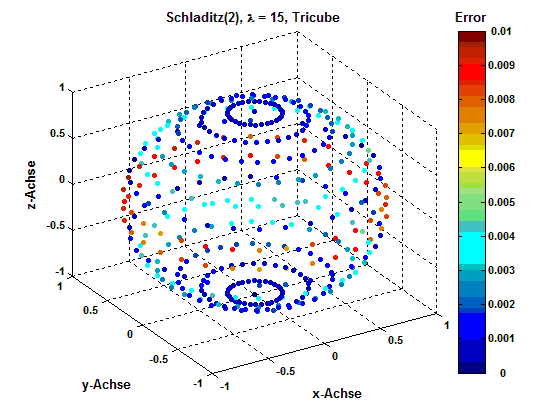}
&
\includegraphics[scale=0.4]{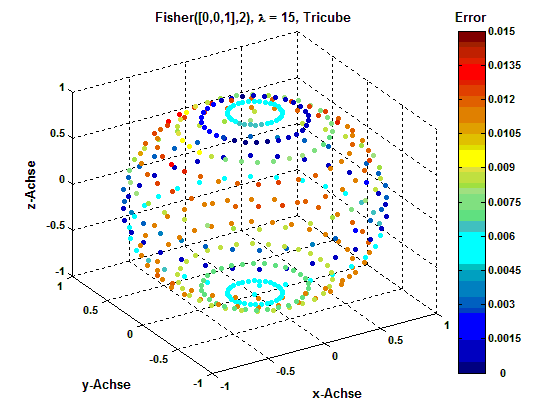}
\end{tabular}
\caption{\footnotesize{ Graphical visualization of the estimation error for the directional distributions $\operatorname{Schladitz}(2)$ and $\operatorname{Fisher}([0,0,1],2)$ with Tricube kernel.}}
\label{Kernel plots}
\end{figure}
An exhaustive simulation study with more distributions, kernels and observation window sizes can be found in~\cite{Sch16}.
\subsection{Entropy estimator}
For the study of the entropy estimator defined in~\eqref{eq def esti e}, an independently marked MPPP of intensity $15$ was simulated in an observation window $B=[0,50]^3$. Recall that the density $f$ is estimated in the additional window $B'=[0,4]^3$. Although $B'$ could still have been chosen to equal the whole observation window $B$ at this point, we took on account as far as possible the side-length relation between these windows appearing in the asymptotic normality test based on~\eqref{eq CLT}. The window $B'$ being small, it has to contain sufficiently many points (fibres) to obtain a fairly good estimation. This is guaranteed by the fact that the intensity of the underlying point process is relatively high.

In the subsequent testing, the kernel $K$ is fixed to tricube because the previous simulations indicated it to yield the best approximation. The bandwidth is the same as for the density estimation, $h=\big(\frac{1 + \vol(B)}{\vol(B)^{10/9}}\big)^{1/4}$, with the same asymptotic order as~\eqref{eq optimal h} and $\vol(B')=\vol(B)^{1/9}$.

\begin{table}[H]
\centering
\begin{tabular}{l|c|c|c|c|c|c|}
& $\e_f$ & $\;\overline{\widehat{\e}_f(B)}\;$ & $\;\operatorname{Var}\widehat{\e}_f(B)\;$ &$\Vert Err\Vert_\infty$ &  MSQE\\ \hline 
Uniform 		& 2.5310 & 2.5165 & 8.7105e-07 & 0.0157 & 0.0459\\ %\hline
Schladitz(2) 	& 2.3554  & 2.3525 & 9.0843e-07 & 0.0039 & 0.0067\\ %\hline
Fisher(2) & 1.7239 & 1.8930 & 2.3662e-06 & 0.1715 & 0.3782\\
Watson(2)	& 1.8646 & 1.6849 & 2.8397e-07 & 0.1804 & 0.5682\\
\end{tabular}
\caption{True value, arithmetic mean, sample variance, absolute error and mean square error of the entropy estimation.}
\end{table}

\begin{figure}
\centering
\begin{tabular}{ccc}
\includegraphics[scale=0.23]{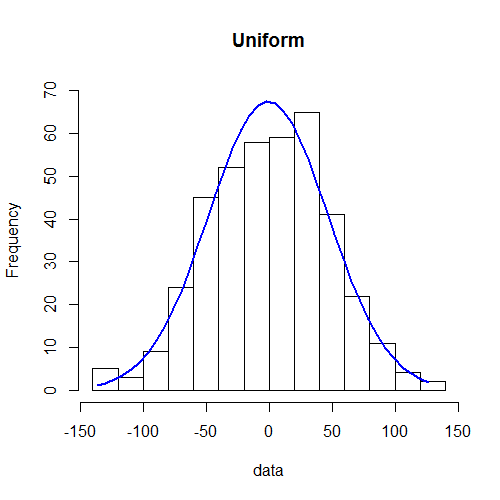}&
\includegraphics[scale=0.23]{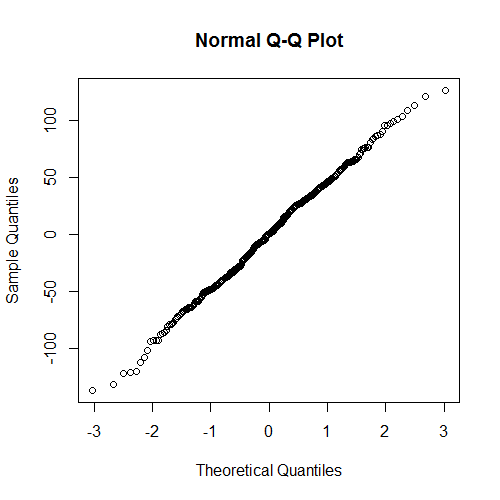}&
\includegraphics[scale=0.23]{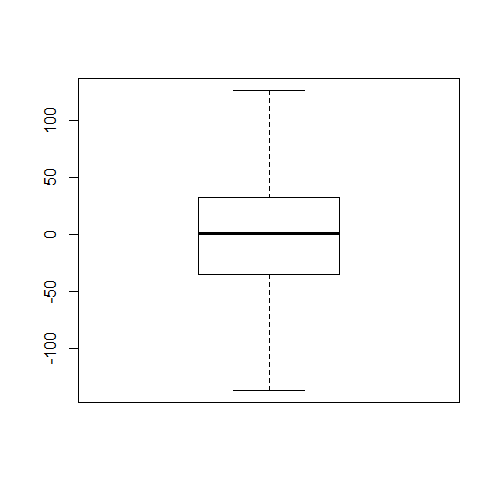}\\
\includegraphics[scale=0.23]{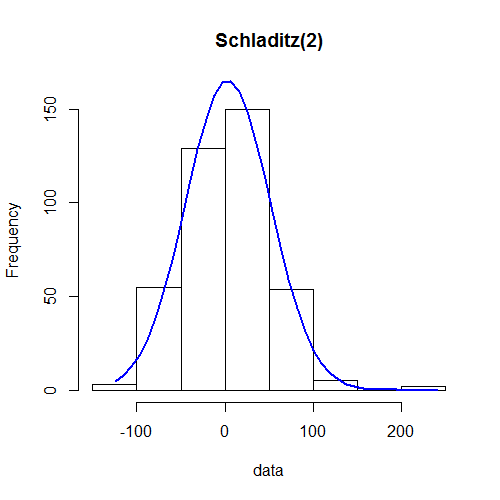}&
\includegraphics[scale=0.23]{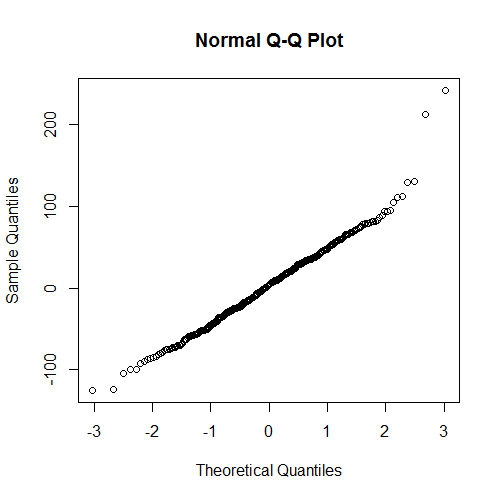}&
\includegraphics[scale=0.23]{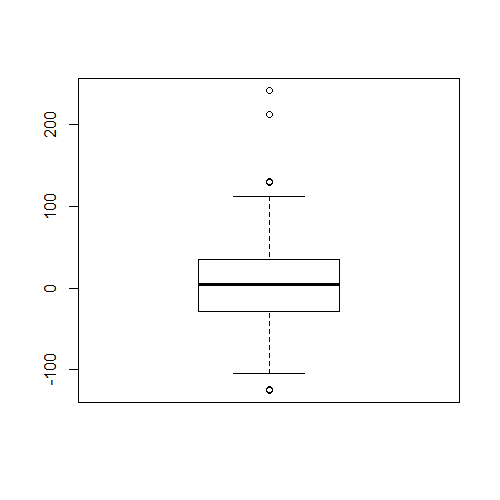}\\
\includegraphics[scale=0.23]{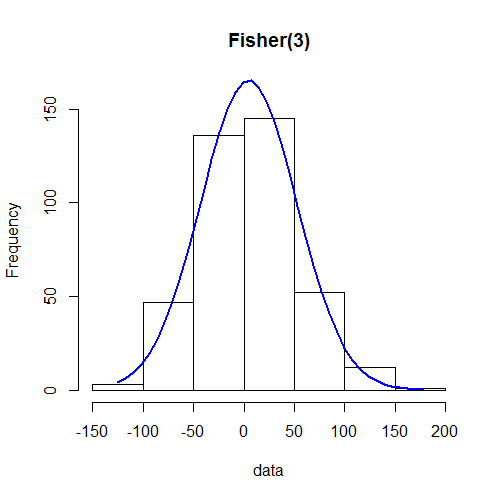}&
\includegraphics[scale=0.23]{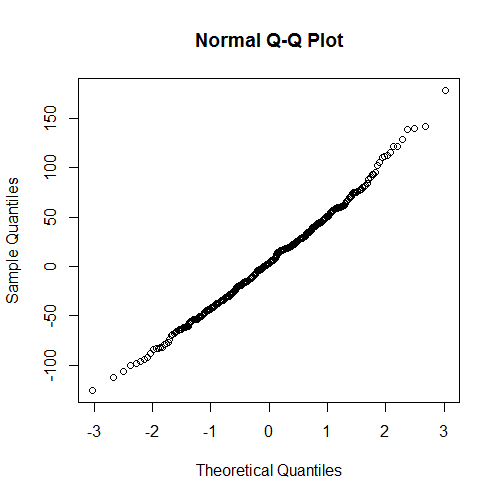}&
\includegraphics[scale=0.23]{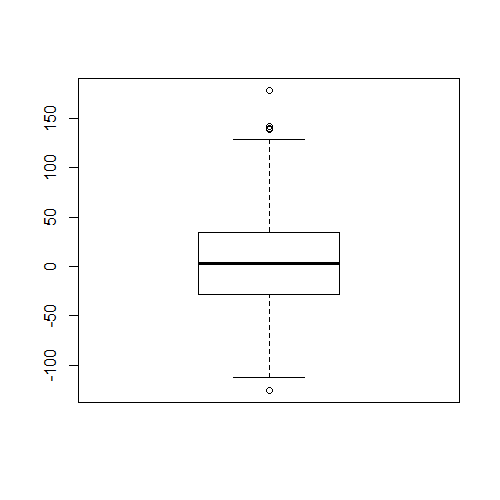}
\end{tabular}
\caption{Fitted histogram, q-q plot and box plot for the entropy estimator $\e^*_f(B)$ with uniform distribution, $\operatorname{Schladitz}(2)$- and $\operatorname{Fisher}(3)$-distribution, respectively.}
\label{CLT study}
\end{figure}

The asymptotic normality of the modified estimator $\widehat{\e}_f^*(B)$ introduced in~\eqref{eq def modified e} has been tested by running 400 simulations of $\widehat{\e}_f^*(B)$ based on two independent copies, $\Psi_{20}$ and $\Psi^*_{20}$, of an independently marked MPPPs of intensity $20$  observed in a window $B=[0,30]^3$. The auxiliary window where the density is estimated is $B'=[0,4]^3$. Figure~\ref{CLT study} summarizes the results for different directional distributions.

In the computation of the normalization terms in~\eqref{eq mu sigma CLT}, the mean has been replaced by its empirical estimation obtained from $180$ independent realizations of $-\log\hat{f}_{B'}(\xi^*_0)$. The integral in the covariance term has been discretized using $343$ latice points in $B'$ and the corresponding empirical estimator of each covariance term. %$q_n^d=\vol(B')^{1+\frac{1}{13}}$

The size of the observation windows and the high density of the underlying point process reproduce the asymptotic nature of the theoretic result. %Notice that the expected number of points in the process lies by $20\cdot30^3=180000$. Nevertheless, 
\subsection{Inhomogeneity detection}
In this section we present two examples where our proposed method detects a (self-generated) inhomogeneity region. The size relationship between the observation window, the inhomogeneity region and the scanning window have been determined following the optimal expression in~\eqref{eq optimal scanning window} with $\alpha_f=0.05$ and $w=7a$. 

\begin{figure}[H]
\centering
\begin{tabular}{c c}
\includegraphics[scale=.45]{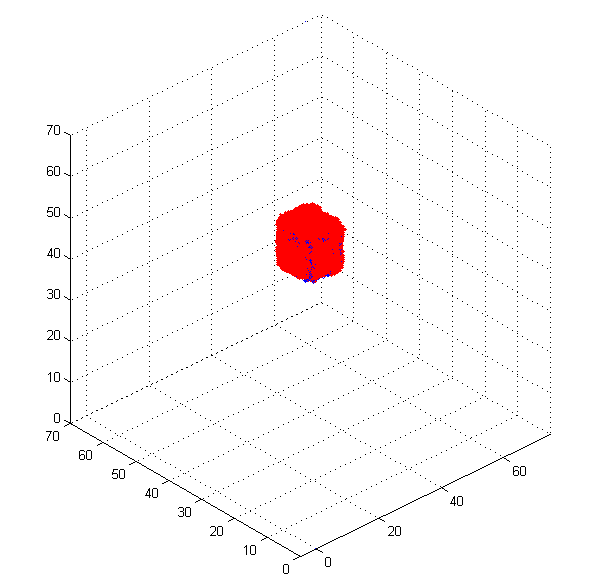}
&
\includegraphics[scale=.40]{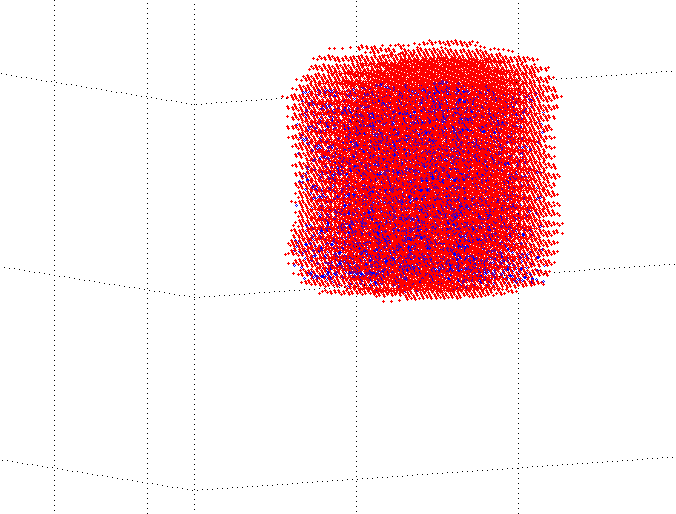}
\end{tabular}
\caption{The cube where the inhomogeneity resides is completely recognized.}
\label{detect example 1}
\end{figure}

In the first example, an underlying independently marked MPPP $\Psi_{5}$ of intensity $5$ is generated in the window $W=[0,70]^3$. The inhomogeneity region, colored in blue in Figure~\ref{detect example 1}, is the cube $A=[0,10^3]+(35,35,35)$. $W$ is scanned by running the window $B=[0,5]^3$ over $130^3$ lattice points. The fibres (points) inside $A$ have directions following a uniform distribution, whereas the direction of fibres outside the inhomogeneity region is Fisher(10)-distributed. The red points constitute the (discretized) estimated inhomogeneity region $\widehat{A}_{B,W}$ obtained as described in Section~\ref{section method}.

The second example illustrates that this method also works when there are more than just one ihnomogeneity region, as long as each component satisfies the constraints~\eqref{eq size condition A and W} and~\eqref{eq optimal scanning window}. In this case, $A=A_1\cup A_2$ with $A_1=[0,10]+(10,10,10)$ and $A_2=[0,10]+(30,30,30)$. The direction of the fibres inside the inhomogeneity region $A$ follow a uniform distribution, whereas the direction of fibres outside $A$ have Fisher(10)-distributed directions. 
\begin{figure}[H]
\centering
\begin{tabular}{c c}
\includegraphics[scale=.45]{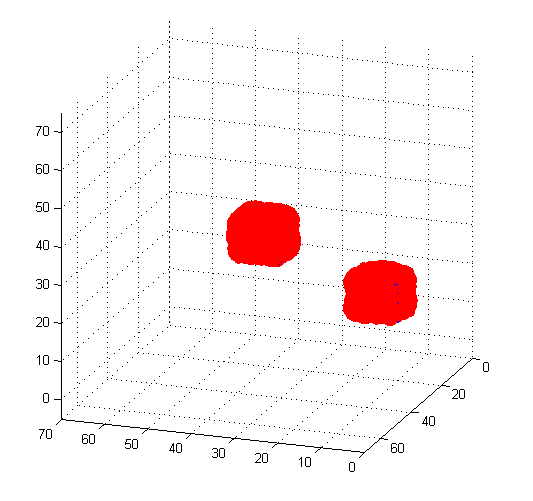}
&
\includegraphics[scale=.36]{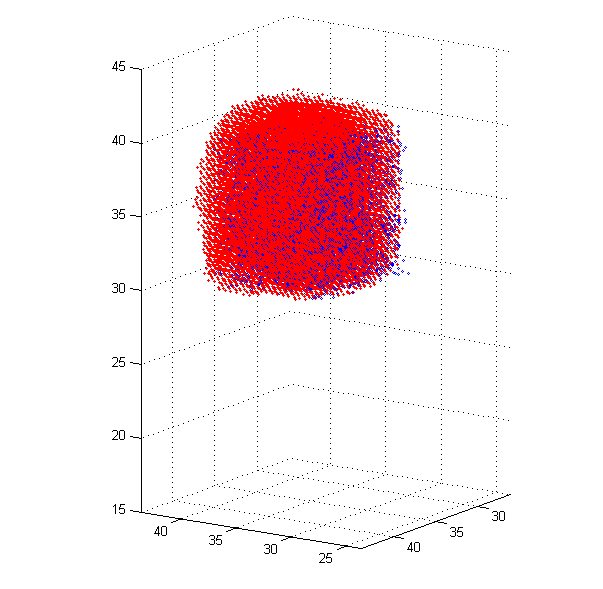}
\end{tabular}
\caption{Two cubes with marks differently distributed are also completely recognized.}
\end{figure}
%For this simulations a lattice of $100^2$ points was used. Due to the higher density of this lattice, the discretized estimated inhomogeneity region (in red) almost covers $A$.

\section{Discussion and outlook}\label{section discussion}

Inhomogeneities such as fibre clusters or deformations in FRPs are directly related to changes in the direction of the fibres that compose the material and detecting these regions is of special interest in order to control its production process.

In this paper we propose a method to detect inhomogeneities based on the entropy as a measure of change in the directional distribution of the fibres. One major advantage of using the entropy is that the problem reduces to a univariate change-point detection that can be performed via the $3\sigma$-rule due to the asymptotic normality of the entropy estimator. 

The tests presented in Section~\ref{section simulations} support the applicability of the theoretical estimator for the density and entropy of directional distributions that underlie the method exposed in Section~\ref{section method}. Within the constraints concerning the shape of the observation windows $W$, $B$ and the inhomogeneity region $A$, as well as assuming prior knowledge about the size relation beteween $W$ and $A$, it is possible to derive an optimal size for the scanning window $B$. 

A finer analysis of the output reveals some discrepancies along the boundary of the inhomogeneity $A$ as Figure~\ref{bdry effect} shows. The reason for this is that, whenever the lower-left corner of the scanning window $B$ is near the boundary $\partial A$ but still outside of $A$, and $B\cap A$ covers most of $B$, the entropy on $B$ is almost the same as when $B$ completely lies in $A$. Thus, even if the whole scanning window does not lie inside $A$, the estimated entropy $\widehat{\e}^*(B)$ will be considered an outlier.

\begin{figure}[H]
\centering
\begin{tabular}{c c}
\includegraphics[scale=.35]{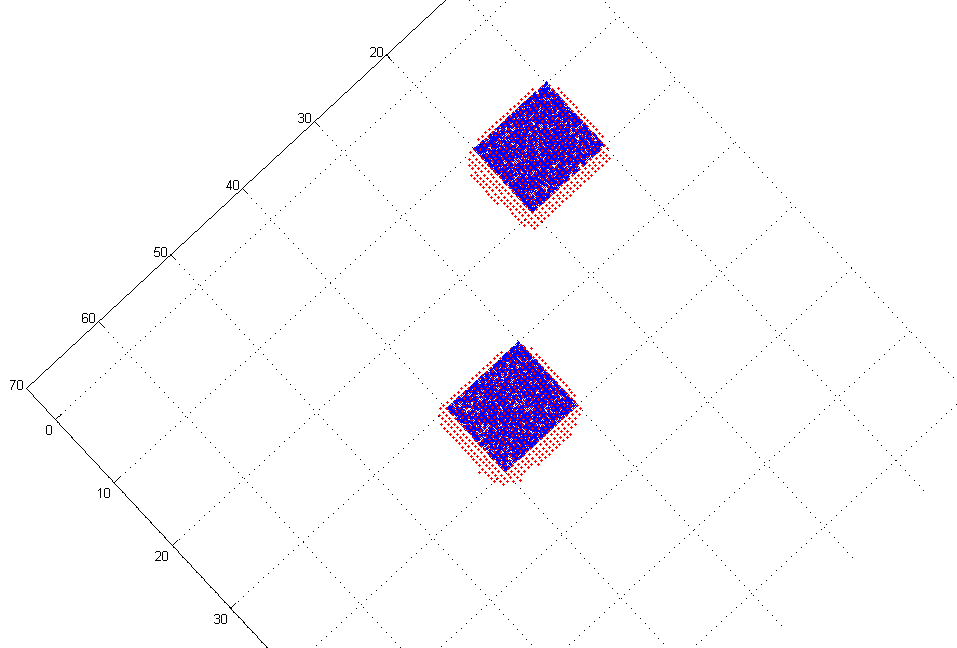}
&
\end{tabular}
\caption{Boundary effect in the inhomogeneity detection.}
\label{bdry effect}
\end{figure}

The shape of the inhomogeneity $A$ seems to affect the performance as well. On the one hand, it is not anymore possible to establish a simple relation between $A$ and the scanning window $B$ (as long as it has a shape of different nature). On the other hand, stronger boundary effects should be expected. In Figure~\ref{fig ball vs cube} the inhomogeneity has been chosen to be a ball of radius $5$ centered in the observation window, and the scanning window is been kept a cube of side-length $5$. We observe how in this case the inhomogeneity is still well recognized, however additional points appear, and the boundary effects observed in the previous example increase.

\begin{figure}[H]
\centering
\begin{tabular}{c c}
\includegraphics[scale=.30]{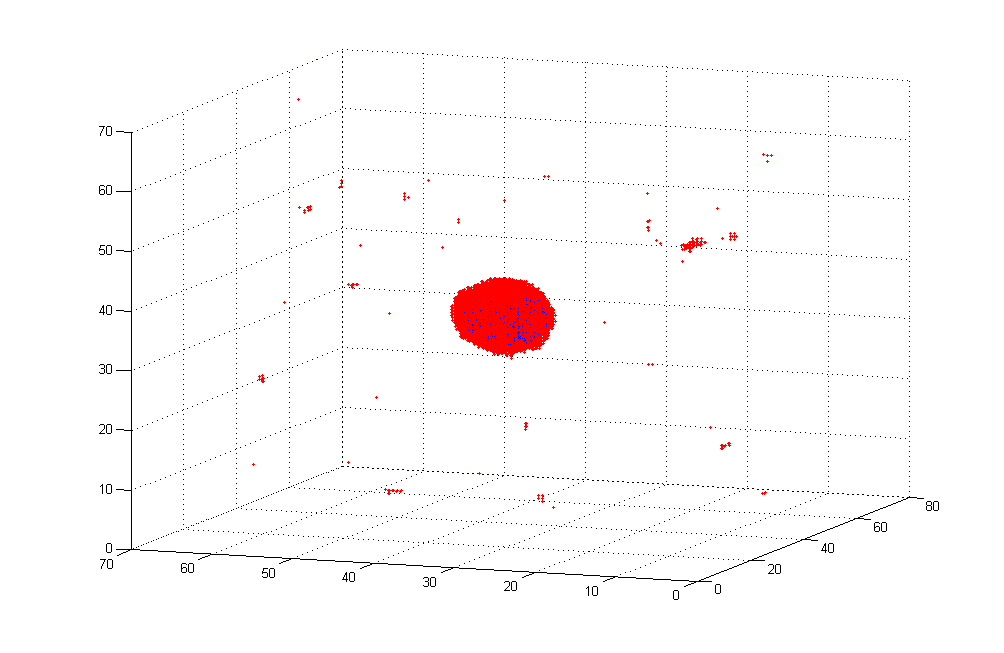}
&
\includegraphics[scale=.25]{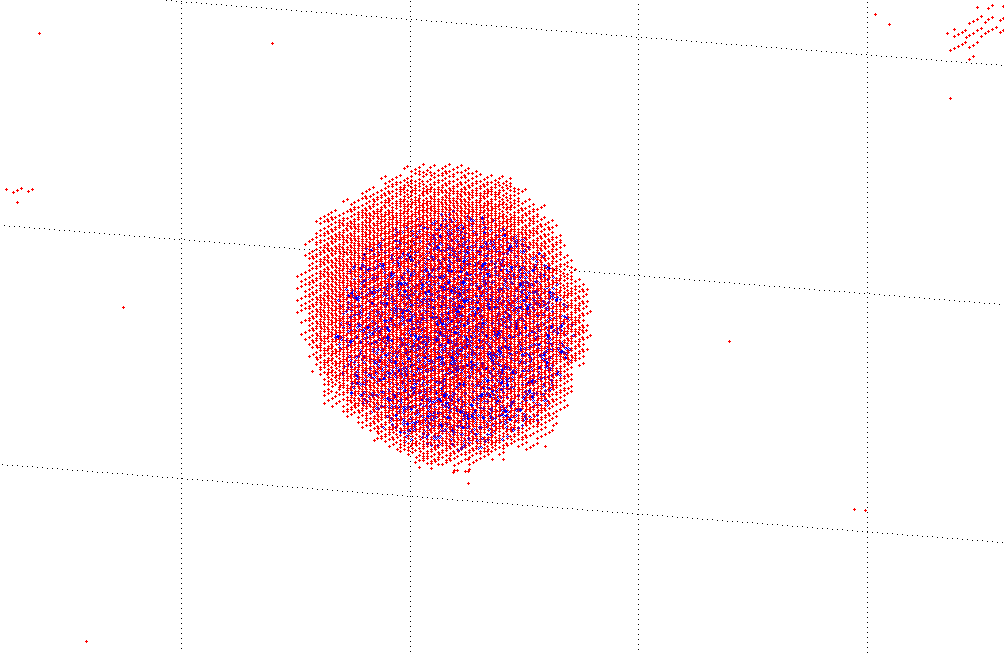}
\end{tabular}
\caption{The method becomes less efficient if the form of the inhomogeneity and of the scanning window notably differ. Boundary effects appear as well.}
\label{fig ball vs cube}
\end{figure}

Further issues appear when the size of the scanning window $B$ is much smaller or larger than the size of the inhomogeneity $A$, c.f. Figure~\ref{fig limitations}.

\begin{figure}[H]
\centering
\begin{tabular}{c c}
\includegraphics[scale=.30]{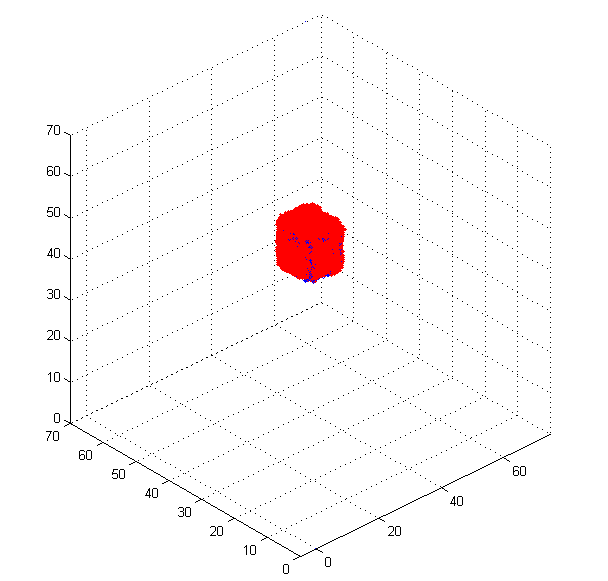}
&
\includegraphics[scale=.30]{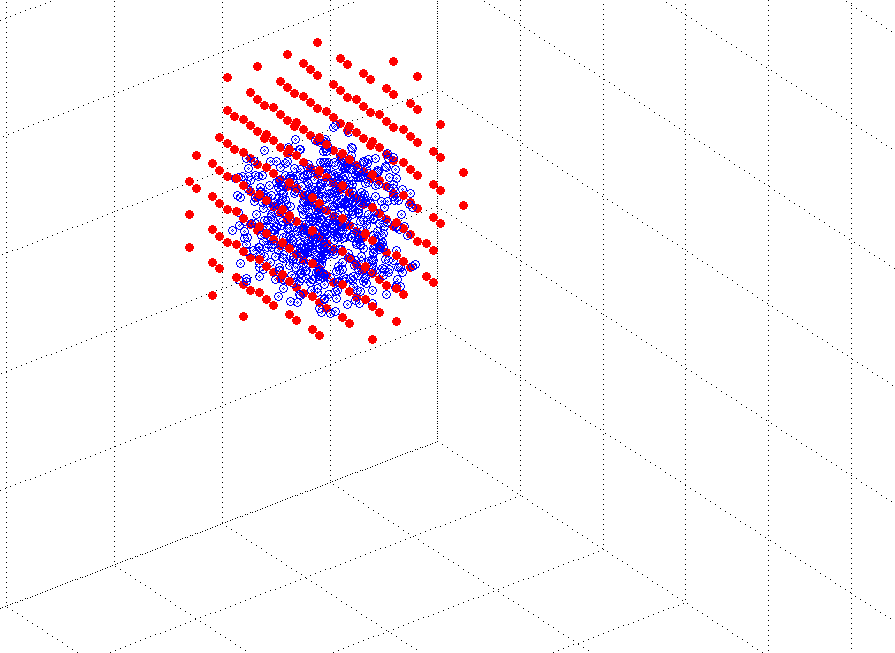}
\end{tabular}
\caption{The performance also decays when the inhomogeneity region is actually larger than expected (left) or smaller (right).}
\label{fig limitations}
\end{figure}

In conclusion, these investigations reveal our method to be a good detection procedure that may potentially be used for a first fast rough scan to rule out substantial inhomogeneities, where their contours need not be perfectly recognized.
\bibliographystyle{amsplain}
\bibliography{validation}

\providecommand{\bysame}{\leavevmode\hbox to3em{\hrulefill}\thinspace}
\providecommand{\MR}{\relax\ifhmode\unskip\space\fi MR }
% \MRhref is called by the amsart/book/proc definition of \MR.
\providecommand{\MRhref}[2]{%
  \href{http://www.ams.org/mathscinet-getitem?mr=#1}{#2}
}
\providecommand{\href}[2]{#2}
\begin{thebibliography}{10}

\bibitem{ARS17}
P.~Alonso-Ruiz and E.~Spodarev, \emph{Estimation of entropy for {P}oisson
  marked point processes}, Advances in {A}pplied {P}robability \textbf{49}
  (2017), no.~1, to appear.

\bibitem{FLE93}
N.~I. Fisher, T.~Lewis, and B.~J.~J. Embleton, \emph{Statistical {A}nalysis of
  {S}pherical {D}ata}, Cambridge University Press, Cambridge, 1993, Revised
  reprint of the 1987 original.

\bibitem{LZJ13}
J.~Li, X.~Zhang, and D.~R. Jeske, \emph{Nonparametric multivariate {CUSUM}
  control charts for location and scale changes}, J. Nonparametr. Stat.
  \textbf{25} (2013), no.~1, 1--20.

\bibitem{MJ93}
I.~B. MacNeill and V.~K. Jandhyala, \emph{Change-point methods for spatial
  data}, Multivariate environmental statistics, North-Holland Ser. Statist.
  Probab., vol.~6, North-Holland, Amsterdam, 1993, pp.~288--306.

\bibitem{OS09}
J.~Ohser and K.~Schladitz, \emph{3{D} {I}mages of {M}aterials {S}tructures:
  {P}rocessing and {A}nalysis}, Wiley, Weinheim, 2009.

\bibitem{Puk94}
F.~Pukelsheim, \emph{The three sigma rule}, Amer. Statist. \textbf{48} (1994),
  no.~2, 88--91.

\bibitem{R+12}
C.~Redenbach, A.~Rack, K.~Schladitz, O.~Wirjadi, and M.~Godehardt, \emph{Beyond
  imaging: on the quantitative analysis of tomographic volume data}, Int. J. of
  Materials Research \textbf{103} (2012), no.~2, 217 -- 227.

\bibitem{RSVW14}
C.~Redenbach, K.~Schladitz, I.~Vecchio, and O.~Wirjadi, \emph{Image analysis
  for microstructures based on stochastic models}, GAMM-Mitteilungen
  \textbf{37} (2014), no.~2, 281--305.

\bibitem{S+06}
K.~Schladitz, S.~Peters, D.~Reinel-Bitzer, A.~Wiegmann, and J.~Ohser,
  \emph{Design of acoustic trim based on geometric modeling and flow simulation
  for non-woven}, Computational Materials Science \textbf{38} (2006), no.~1, 56
  -- 66.

\bibitem{Sch16}
J.~Schwarz, \emph{Sch\"atzer f\"ur {D}ichte und {E}ntropie der
  {M}arkenverteilung eines markierten {P}oisson-{P}unktprozesses auf der
  {S}ph\"are}, Bachelor thesis, Ulm University, 2016.

\bibitem{Sha48}
C.~E. Shannon, \emph{A mathematical theory of communication}, Bell System Tech.
  J. \textbf{27} (1948), 379--423, 623--656.

\bibitem{SJT12}
L.~Shu, W.~Jiang, and K.~Tsui, \emph{A standardized scan statistic for
  detecting spatial clusters with estimated parameters}, Naval Res. Logist.
  \textbf{59} (2012), no.~6, 397--410.

\bibitem{Tsy09}
A.~B. Tsybakov, \emph{Introduction to {N}onparametric {E}stimation}, Springer
  Series in Statistics, Springer, New York, 2009.

\end{thebibliography}
\end{document}